\newtheorem{theorem}{Theorem}
\newtheorem{corollary}{Corollary}
\newtheorem{lemma}{Lemma}
\newcommand{\bigp}{\mathcal{P}}
\newcommand{\bign}{\mathcal{N}}
\renewcommand{\bigl}{\mathcal{L}}
\renewcommand{\S}{\mathcal{S}}
\let\leq\leqslant
\let\geq\geqslant
\let\epsilon\varepsilon
\newenvironment{enumeratei}{\begin{enumerate}[label=\textup{(\roman*)}, noitemsep, topsep=1.5mm, labelindent=.8em, leftmargin=*, widest=.]}{\end{enumerate}}
\let\old@setaddresses\@setaddresses
\def\@setaddresses{\bgroup\parindent 0pt\let\scshape\relax\old@setaddresses\egroup}
\title{Making Octants Colorful\\ 
and Related Covering Decomposition Problems}
\author[J.~Cardinal]{Jean Cardinal}
\author[K.~Knauer]{Kolja Knauer}
\author[P.~Micek]{Piotr Micek}
\author[T.~Ueckerdt]{Torsten Ueckerdt}
\thanks{A preliminary version of this paper was presented at the \emph{2014 ACM-SIAM Symposium on Discrete Algorithms (SODA'14)}.}
\thanks{Journal version of this paper is submitted to \emph{SIAM J.\ Discrete Math}.}
\address[J.~Cardinal]{Universit\'e libre de Bruxelles, Brussels, Belgium}
\email{jcardin@ulb.ac.be}
\address[K.~Knauer]{LIRMM, Universit\'e Montpellier 2, Montpellier, France}
\email{kolja.knauer@gmail.com}
\address[P.~Micek]{Theoretical Computer Science Department, Faculty of Mathematics and Computer Science, Jagiellonian University, Krak\'{o}w, Poland}
\email{piotr.micek@tcs.uj.edu.pl}
\address[T.~Ueckerdt]{Karlsruhe Institute of Technology, Department of Mathematics, Karlsruhe, Germany}
\email{torsten.ueckerdt@kit.edu}
\begin{document}
\date{}
\maketitle

\begin{abstract}
We give new positive results on the long-standing open problem of geometric covering decomposition for homothetic polygons. In particular, we prove that for any positive integer $k$, every finite set of points in $\mathbb{R}^3$ can be colored with $k$ colors so that every translate of the negative octant containing at least $k^6$ points contains at least one of each color. The best previously known bound was doubly exponential in $k$. This yields, among other corollaries, the first polynomial bound for the decomposability of multiple coverings by homothetic triangles. We also investigate related decomposition problems involving intervals appearing on a line. We prove that no algorithm can dynamically maintain a decomposition of a multiple covering by intervals under insertion of new intervals, even in a {\em semi-online} model, in which some coloring decisions can be delayed. This implies that a wide range of sweeping plane algorithms cannot guarantee any bound even for special cases of the octant problem.
\end{abstract}

\section{Introduction and Main Results}

We study coloring problems for hypergraphs induced by simple geometric objects. Given a family of convex bodies in $\mathbb{R}^d$, a natural colorability question that one may consider is the following: is it true that for any positive integer $k$, every collection of points $\bigp \subset\mathbb{R}^d$ can be colored with $k$ colors so that any element of the family containing at least $p(k)$ of them, for some function $p(k)$, contains at least one of each color? This question has been investigated previously for convex bodies in the plane such as halfplanes and translates of a convex polygon. 

\subsection{Octants in three-space.}

In this paper, we give a polynomial upper bound on $p(k)$ when the family under consideration is the set of translates of the three-dimensional negative octant $\{(x,y,z)\in \mathbb{R}^3 : x\leq 0, y\leq 0, z\leq 0\}$. The best previously known bound is due to Keszegh and P{\'a}lv{\"o}lgyi, and is doubly exponential in $k$~\cite{KP14}. 

\begin{theorem}
\label{thm:ub}
There exists a constant $a<6$ such that for any positive integer $k$, every finite set $\bigp$ of points in $\mathbb{R}^3$ can be colored with $k$ colors so that every translate of the negative octant containing at least $k^a$ points of $\bigp$ contains at least one of each color.   
\end{theorem}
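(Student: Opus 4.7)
My overall strategy is to prove the theorem by iteratively extracting \emph{shallow hitting sets}, one per color. Call a subset $S\subseteq\bigp$ a shallow hitting set at threshold $m$ if every translate of the negative octant containing at least $m$ points of $\bigp$ contains at least one point of $S$. If for every $\bigp$ one can find such an $S$ whose removal leaves $\bigp\setminus S$ with the same shallow-hitting property at a threshold only slightly larger, then assigning $S$ color $1$ and recursing on $\bigp\setminus S$ with $k-1$ colors produces the desired $k$-coloring. Unrolling the induction $k$ times yields a final threshold of the form $k^a$, so the entire task reduces to constructing one such shallow hitting set with small threshold loss.

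The construction of a single color class is in essence a $2$-dimensional problem. Sort the points of $\bigp$ by $z$-coordinate and partition them into horizontal \emph{slabs}. For any negative octant $O$ with apex $(x_0,y_0,z_0)$, the intersection of $O$ with the topmost slab it meets is exactly the negative quadrant $\{(x,y):x\leq x_0,\ y\leq y_0\}$ restricted to that slab. The plan is then to invoke, slab by slab, a known polynomial upper bound for the polychromatic problem on translates of the negative quadrant in $\mathbb{R}^2$ (equivalently, on bottomless rectangles), and to glue the resulting local hitting sets together. Provided every heavy octant contains sufficiently many points in its topmost nonempty slab, the slab-wise union is a valid shallow hitting set for octants in $\mathbb{R}^3$.

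The main obstacle will be bounding the exponent below $6$. Three parameters must be balanced: the polynomial degree of the 2-dimensional quadrant (or bottomless rectangle) bound, the slab capacity, which controls how much of a heavy octant can ``escape above'' its top slab, and the multiplicative loss in the threshold incurred at each of the $k$ peelings. A dyadic slab partition naively costs an extra factor per peeling and compounds to a super-polynomial bound, which is presumably what leads to the doubly-exponential bound of Keszegh--Pálvölgyi. To avoid this, I would fix slab capacities so that each peeling loses only a constant factor in the threshold, use the slab's occupancy to force a heavy octant to remain heavy in its top slab, and finally verify by a direct calculation that the compound exponent $a$ stays strictly below $6$.
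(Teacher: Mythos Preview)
Your slab reduction has a real gap. It is true that a negative octant $O$ with apex $(x_0,y_0,z_0)$, restricted to a slab lying entirely below $z_0$, cuts out the quadrant $\{x\le x_0,\ y\le y_0\}$ on that slab. But nothing forces this quadrant to contain many points of the slab: there is no monotonicity in $z$, so an octant with $m$ points can have them concentrated in low slabs while its topmost nonempty slab contributes a single point (put one point with small $x$ in the top slab, fill the rest of that slab with points of large $x$, and fill many lower slabs with points whose $(x,y)$-coordinates are very negative). ``Slab capacity'' bounds only the slab's total size, not what a particular quadrant sees inside it, so the union of slab-wise 2D hitting sets need not hit heavy octants. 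A side remark: negative quadrants and bottomless rectangles are different range spaces and not equivalent; for quadrants one already has $p(k)=k$. Separately, your peeling scheme needs each extracted $S$ to be \emph{shallow}---every heavy octant should contain only boundedly many points of $S$---not merely hitting; otherwise an octant with $m$ points of $\bigp$ may have too few points left in $\bigp\setminus S$ for the recursion to apply, and you do not address this.

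The paper's argument is structurally different. It partitions $\bigp$ not by $z$-value but into \emph{domination layers} $\bigl_1,\bigl_2,\ldots$, each an antichain under coordinatewise order. On a single antichain there is no reduction to a 2D problem: instead the Keszegh--P\'alv\"olgyi $2$-coloring theorem (with constant $c\le 12$) is applied recursively to double the number of colors, and the threshold growth is controlled by a self-coverability lemma (given $t$ independent points, the set of points dominating none of them is covered by $2t+1$ octants whose interiors avoid all $t$). Pigeonholing over these $2t+1$ octants yields $p(2k)\le(2c-1)\,p(k)$, hence $p(k)=O(k^{\log_2(2c-1)})$ on antichains. The layers are then glued by a recoloring step using that every point in $\bigl_i$ dominates a point in each earlier layer, so any octant reaching layer $k$ already sees all $k$ colors; otherwise pigeonhole across at most $k-1$ layers finishes. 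With $c=12$ the exponent is about $5.58<6$.
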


A \emph{dual} version of the above problem sometimes referred to as {\em cover-de\-com\-po\-sa\-bi\-li\-ty} can be stated as follows: Given a collection $C$ of convex bodies, we wish to color them with $k$ colors so that any point of $\mathbb{R}^d$ covered by at least $p(k)$ of them, for some function $p(k)$, is covered by at least one of each color. In the \emph{primal} setting with respect to octants we can replace the point set $\bigp$ with a set $C$ of \emph{positive} octants with apices in $\bigp$. Then the primal value of $\bigp$ coincides with the dual value of $C$. Since clearly the dual problem is equivalent if we pick negative instead of positive octants, we have:

\begin{corollary}
\label{cor:dualoct}
There exists a constant $a<6$ such that for any positive integer $k$, every finite set $\bigp$ of translates of the negative octant can be colored with $k$ colors so that every point of $\mathbb{R}^3$ contained in at least $k^a$ octants of $\bigp$ is contained in at least one of each color.   
\end{corollary}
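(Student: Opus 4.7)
The plan is to derive Corollary~\ref{cor:dualoct} directly from Theorem~\ref{thm:ub} via point--octant duality, following the hint given in the paragraph preceding the corollary. The core observation is a simple incidence symmetry: a point $q\in\mathbb{R}^3$ lies in the translate of the negative octant with apex $p$ if and only if $p$ dominates $q$ coordinatewise, which is equivalent to saying that $p$ lies in the translate of the \emph{positive} octant with apex $q$. This correspondence swaps the roles of ``point being covered'' and ``apex of a covering body'' without losing any information.

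Concretely, I would proceed as follows. Given a finite family $C$ of translates of the negative octant, let $\bigp$ be the set of their apices, so that $|C|=|\bigp|$ with each octant in natural bijection with its apex. Since the negative and positive octants are related by the central reflection $(x,y,z)\mapsto(-x,-y,-z)$, the conclusion of Theorem~\ref{thm:ub} applies equally well to translates of the positive octant. Applying this symmetric form of the theorem to $\bigp$ produces a $k$-coloring of the apices such that every translate of the positive octant containing at least $k^a$ points of $\bigp$ contains at least one apex of each color. I would then transfer this coloring to $C$ by assigning to each octant the color of its apex.

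Finally, for a point $q\in\mathbb{R}^3$ covered by at least $k^a$ octants of $C$, the duality identifies the octants covering $q$ with those apices of $\bigp$ that lie in the positive octant with apex $q$. By construction this apex set contains at least one point of each color, so $q$ is covered by at least one octant of each color, which is exactly the assertion of the corollary. The derivation is entirely mechanical once the duality is in place, and the main (very minor) point to verify is that the threshold $k^a$ is preserved without any additive or multiplicative blow-up; this is immediate from the exactness of the coordinatewise form of octant inclusion, so I do not anticipate any substantive obstacle.
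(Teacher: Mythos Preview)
Your argument is correct and is exactly the duality the paper sketches in the paragraph preceding the corollary: identify each negative octant with its apex, use the incidence $q\in$ (negative octant at $p$) $\Leftrightarrow$ $p\in$ (positive octant at $q$), and invoke Theorem~\ref{thm:ub} after a central reflection to swap positive and negative octants. There is nothing to add; the approaches coincide.
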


The next corollary is obtained by observing that the intersections of a set of octants with a plane in $\mathbb{R}^3$ that is not parallel to any axis form a set of homothetic triangles (see Figure~\ref{fig:tri}). 
\begin{corollary}
\label{cor:dualtri}
There exists a constant $a<6$ such that for any positive integer $k$, every finite set $\bigp$ of homothetic triangles in the plane can be colored with $k$ colors so that every point contained in at least $k^a$ triangles of $\bigp$ is contained in at least one of each color.   
\end{corollary}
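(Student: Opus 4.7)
The plan is to lift the planar triangle covering problem into the three-dimensional octant setting and invoke Corollary~\ref{cor:dualoct}. First I would fix a plane $\Pi \subset \mathbb{R}^3$ transversal to all three coordinate axes, concretely $\Pi = \{(x,y,z) : x+y+z = 0\}$. A direct computation shows that for any apex $(a,b,c)$ with $a+b+c \geq 0$, the intersection of the negative octant $\{(x,y,z) : x \leq a,\, y \leq b,\, z \leq c\}$ with $\Pi$ is a closed triangle whose side length is proportional to $a+b+c$ and whose centroid equals the orthogonal projection of $(a,b,c)$ onto $\Pi$. As the apex ranges over the halfspace $a+b+c \geq 0$, the collection of such intersections sweeps out \emph{all} homothetic copies (with positive scaling factor) of a single reference triangle $T_0 \subset \Pi$, and the map $\text{apex} \mapsto \text{triangle}$ is a bijection.

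Now let $\bigp$ be a finite family of homothetic triangles in $\mathbb{R}^2$, obtained from some common reference triangle $T$ by translation and positive scaling. I would pick an affine transformation $\varphi \colon \mathbb{R}^2 \to \Pi$ sending $T$ to $T_0$. Since affine maps send translates of $T$ to translates of $T_0$ and positive scalings to positive scalings, $\varphi(\bigp)$ is a finite family of homothets of $T_0$; and since affine maps preserve incidence, a point $q$ lies in at least $m$ triangles of $\bigp$ if and only if $\varphi(q)$ lies in at least $m$ triangles of $\varphi(\bigp)$. Using the bijection from the previous paragraph I would then choose, for each triangle in $\varphi(\bigp)$, the unique translate of the negative octant whose section by $\Pi$ is that triangle, producing a finite set $\bigp'$ of octants in bijection with $\bigp$.

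Finally I would apply Corollary~\ref{cor:dualoct} to $\bigp'$ with the same value of $k$, obtaining a $k$-coloring in which every point of $\mathbb{R}^3$ covered by at least $k^a$ octants is covered by at least one of each color. Transporting this coloring back to $\bigp$ through the correspondence, any point $q \in \mathbb{R}^2$ covered by at least $k^a$ triangles of $\bigp$ corresponds to the point $\varphi(q) \in \Pi \subset \mathbb{R}^3$ covered by at least $k^a$ octants of $\bigp'$, so the triangles containing $q$ represent all $k$ colors. The only step that requires genuine verification is the explicit identification of octant--plane sections with the homothets of $T_0$; once that elementary geometric fact is in hand, the corollary follows with the same exponent $a$ as in Corollary~\ref{cor:dualoct}.
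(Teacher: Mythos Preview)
Your proposal is correct and follows essentially the same approach as the paper, which derives the corollary in one sentence by observing that the intersections of negative octants with a plane not parallel to any axis are homothetic triangles (together with an implicit appeal to Corollary~\ref{cor:dualoct}). You have simply spelled out the details of that observation---the explicit plane $x+y+z=0$, the bijection between apices and homothets of $T_0$, and the affine normalization---that the paper leaves to a figure and to the reader.
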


Finally, using standard arguments, the latter result can be extended to infinite sets, and cast as a cover-decomposability statement. Here a covering is said to be {\em decomposable} into $k$ coverings when the objects in the covering can be colored with $k$ colors so that every color class is a covering by itself.
\begin{corollary}
\label{cor:inf}
There exists a constant $a<6$ such that for any positive integer $k$, every locally finite $k^a$-fold covering of the plane by homothetic triangles is decomposable into $k$ coverings.    
\end{corollary}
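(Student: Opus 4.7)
\medskip
\noindent\textbf{Proof proposal for Corollary~\ref{cor:inf}.}
The plan is to reduce the infinite statement to Corollary~\ref{cor:dualtri} by a standard compactness (or equivalently, K\"onig's lemma) argument. Let $\bigp$ be a locally finite $k^a$-fold covering of $\mathbb{R}^2$ by homothetic triangles, with $a<6$ the constant from Corollary~\ref{cor:dualtri}. Fix a compact exhaustion $K_1 \subset K_2 \subset \cdots$ of $\mathbb{R}^2$, say by closed balls of radius $n$ centered at the origin. By local finiteness each set $\bigp_n := \{T \in \bigp : T \cap K_n \neq \emptyset\}$ is finite, and $\bigp = \bigcup_n \bigp_n$; in particular $\bigp$ is countable. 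Crucially, for any point $p \in K_n$, every triangle of $\bigp$ that contains $p$ already lies in $\bigp_n$, so $p$ is still covered at least $k^a$ times by $\bigp_n$.

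Next I would apply Corollary~\ref{cor:dualtri} to each finite family $\bigp_n$ to obtain a $k$-coloring $\chi_n : \bigp_n \to \{1,\dots,k\}$ under which every point covered by at least $k^a$ triangles of $\bigp_n$ receives all $k$ colors. Extend $\chi_n$ arbitrarily to the remaining triangles of $\bigp$, thereby viewing $\chi_n$ as an element of the product space $\{1,\dots,k\}^{\bigp}$. Since $\bigp$ is countable and the alphabet is finite, this product is compact and metrizable, so the sequence $(\chi_n)_{n\geq 1}$ admits a convergent subsequence $(\chi_{n_j})_j$ with some limit $\chi : \bigp \to \{1,\dots,k\}$. (If one prefers a purely combinatorial formulation, the same conclusion follows from K\"onig's lemma applied to the tree whose nodes at level $m$ are the $k$-colorings of the first $m$ triangles in some enumeration of $\bigp$ that occur as the restriction of infinitely many $\chi_n$.)

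It then remains to verify that $\chi$ witnesses the required decomposition, i.e.\ that every color class is a covering of $\mathbb{R}^2$. Fix an arbitrary point $p \in \mathbb{R}^2$ and choose $n_0$ with $p \in K_{n_0}$; the set $S_p := \{T \in \bigp : p \in T\}$ is finite by local finiteness. For every $n_j \geq n_0$ we have $S_p \subseteq \bigp_{n_j}$, and $p$ is covered $k^a$ times inside $\bigp_{n_j}$, so $\chi_{n_j}(S_p) = \{1,\dots,k\}$. Since $S_p$ is finite and $\chi_{n_j} \to \chi$ pointwise, eventually $\chi_{n_j}$ and $\chi$ agree on $S_p$, whence $\chi(S_p) = \{1,\dots,k\}$; so $p$ is covered by at least one triangle of each color.

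The substantive content is already contained in Corollary~\ref{cor:dualtri}; the only subtle point in the present argument is arranging the finite instances so that the local covering multiplicity is preserved when passing from $\bigp$ to $\bigp_n$, which is exactly what local finiteness delivers. I expect no serious obstacle beyond this bookkeeping.
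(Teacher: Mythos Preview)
Your argument is correct and is precisely the ``standard argument'' the paper alludes to without spelling out: the paper simply states that Corollary~\ref{cor:dualtri} extends to the infinite, locally finite setting by standard arguments, and your compactness/K\"onig reduction is exactly that. The one point worth being explicit about---that local finiteness plus compactness of each $K_n$ makes $\bigp_n$ finite and preserves the covering multiplicity at every $p\in K_n$---you have handled correctly.
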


The proof of Theorem~\ref{thm:ub} is given in Section~\ref{sec:pf1}.\\

\begin{figure}
\begin{center}
\subfigure[\label{fig:tri}From octants to triangles.]{\includegraphics[width=.5\textwidth]{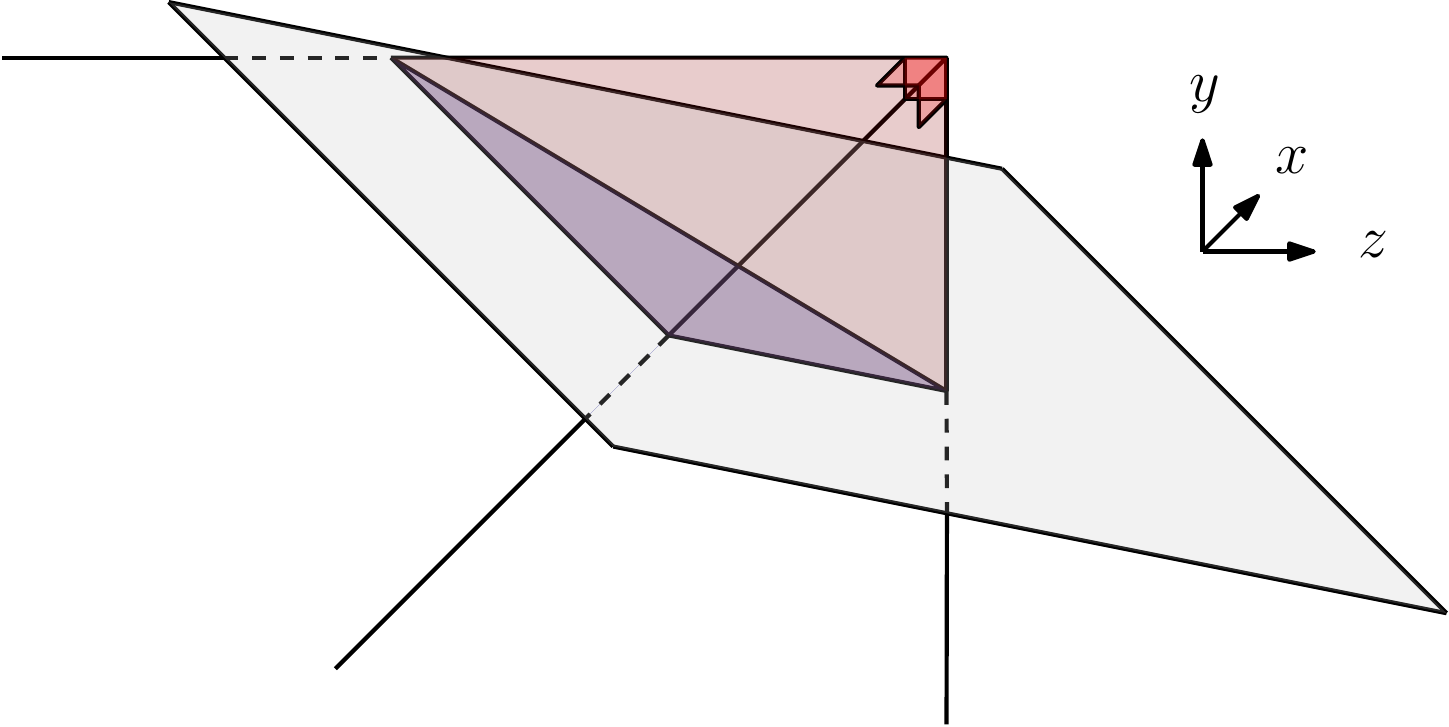}}
\hspace{1cm}
\subfigure[\label{fig:br}From octants to bottomless rectangles.]{\includegraphics[width=.3\textwidth]{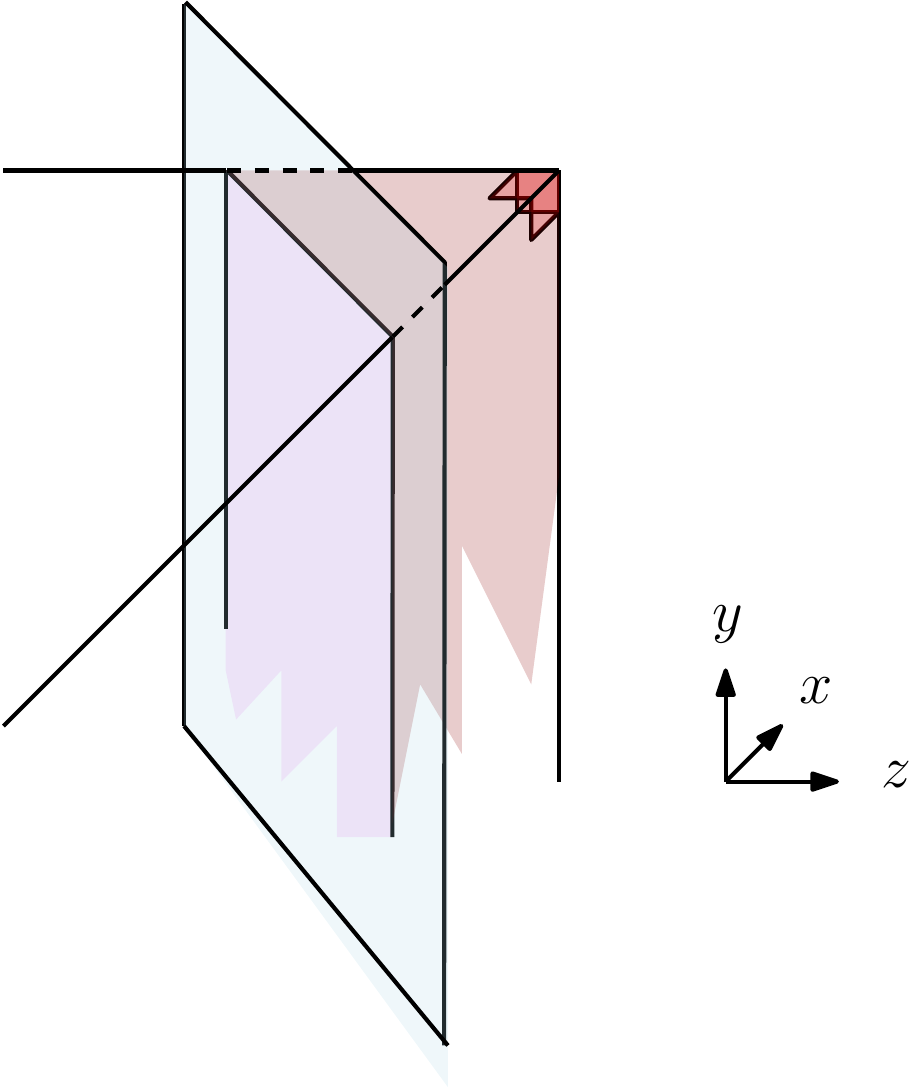}}
\end{center}
\caption{Special cases of the octant coloring problem.}
\end{figure}

\subsection{Intervals, bottomless rectangles, and sweeping algorithms.}

It is well-known that a theorem similar to Corollary~\ref{cor:inf} holds for the simpler case of intervals on the real line. Rado~\cite{R48} observed that every $k$-fold covering of the real line by intervals can be decomposed into $k$ coverings. Using this result, it is not difficult to prove that $p(k)=k$ for translates of negative quadrants in $\mathbb{R}^2$. 

In the second part of the paper, we study the problem of maintaining a decomposition of a set of intervals {\em under insertion}. This problem is similar in spirit, but distinct from the previous one. 

We are now given a positive integer $k$, a collection of intervals on the real line, and for each such interval a real number representing an {\em insertion time}. This collection represents a set of intervals that evolves over time, in which the intervals present at time $x$ are exactly those whose insertion time is at most $x$. We can now wonder whether there exists a function $p(k)$ such that the following holds: there exists a $k$-coloring of the intervals in the collection $\S$ such that, {\em at any time}, any point that is covered by at least $p(k)$ intervals {\em present at that time} is covered by at least one of each color.

This can be conveniently represented in the plane by representing each interval $[a,b]$ with insertion time $t$ as an axis-aligned rectangle with vertex coordinates $(a, -t), (b, -t), (b, -\infty ), (a, -\infty )$, hence viewing time as going downward in the vertical direction. We refer to such rectangles, with a bottom edge at infinity, as {\em bottomless} rectangles. Now the $k$-coloring must be such that every point $p\in \mathbb{R}^2$ that is contained in at least $p(k)$ such rectangles must be contained in at least one of each color. Hence the problem is actually about {\em decomposition of coverings by bottomless rectangles}. We illustrate this point of view in Figure~\ref{fig:bottomless}. Also note that bottomless rectangles can be seen as \emph{degenerate homothetic triangles}, which we will make use of for Corollary~\ref{cor:negative}. 

\begin{figure}
\begin{center}
\includegraphics[width=.4\textwidth]{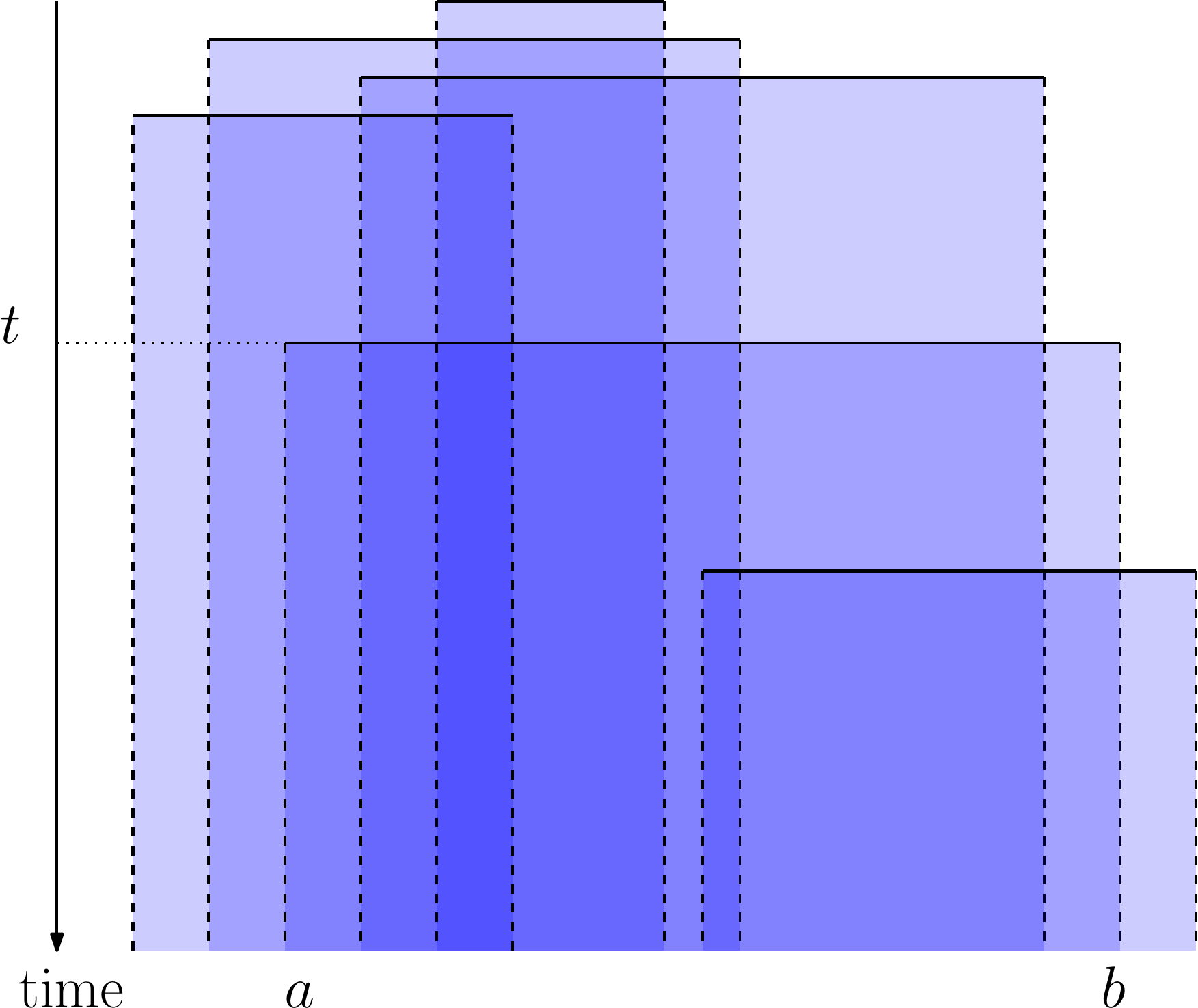}
\end{center}
\caption{\label{fig:bottomless}Intervals under insertion and bottomless rectangles.}
\end{figure}

We now observe that bottomless rectangles can be formed by intersecting a negative octant with a vertical plane, as depicted on Figure~\ref{fig:br}. Hence we can formulate a new corollary of our main Theorem.
\begin{corollary}
\label{cor:dualbott}
There exists a constant $a<6$ such that for any positive integer $k$, every finite set $\bigp$ of bottomless rectangles in the plane can be colored with $k$ colors so that every point contained in at least $k^a$ rectangles of $\bigp$ is contained in at least one of each color. Equivalently, every collection of intervals, each associated with an insertion time, can be $k$-colored so that at any time, every point covered by at least $k^a$ intervals present at this time is covered by at least one of each color.
\end{corollary}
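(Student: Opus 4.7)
The plan is to reduce Corollary~\ref{cor:dualbott} to the dual octant statement in Corollary~\ref{cor:dualoct}, via the correspondence already sketched in Figure~\ref{fig:br}. I would make the lift precise, transfer the coloring from octants to rectangles, and then verify that the multiplicity condition is preserved.

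First, I would set up the lift. To each bottomless rectangle $R = [a,b] \times (-\infty, c]$ associate the translate of the negative octant
\[
O(R) = \{(x,y,z) \in \mathbb{R}^3 : x \leq b,\ y \leq -a,\ z \leq c\},
\]
and to each planar point $p = (u,v)$ associate its lift $\varphi(p) := (u,-u,v)$ onto the vertical plane $\{y = -x\}$. A one-line check confirms that $p \in R$ if and only if $\varphi(p) \in O(R)$, because the conjunction $a \leq u \leq b$ and $v \leq c$ is equivalent to $u \leq b$, $-u \leq -a$, and $v \leq c$. In particular, the map $\varphi$ preserves the multiplicity with which a point is covered.

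Second, given a finite collection $\bigp$ of bottomless rectangles, I would apply Corollary~\ref{cor:dualoct} to the octant family $\{O(R) : R \in \bigp\}$ to obtain a $k$-coloring with the promised dual property, and then color each $R \in \bigp$ with the color of $O(R)$. If a planar point $p$ lies in at least $k^a$ rectangles of $\bigp$, then $\varphi(p)$ lies in at least $k^a$ of the corresponding octants, hence sees all $k$ colors, and therefore so does $p$. The equivalent formulation for intervals under insertion is already spelled out in the discussion around Figure~\ref{fig:bottomless}, so nothing further is needed on that side.

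There is no substantial obstacle: the reduction is direct, geometric, and preserves the exponent $a$ exactly, so the same constant $a<6$ from Theorem~\ref{thm:ub} carries through verbatim. The only subtle point is choosing the lift so that the \emph{negative} shape of the octant matches the two-sided horizontal extent of the rectangle, which is why one coordinate of $\varphi$ is reflected.
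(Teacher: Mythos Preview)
Your proposal is correct and follows exactly the approach the paper outlines: reduce to Corollary~\ref{cor:dualoct} by realizing bottomless rectangles as intersections of negative octants with a vertical plane (Figure~\ref{fig:br}), and invoke the interval--rectangle correspondence of Figure~\ref{fig:bottomless} for the second formulation. You have simply made the coordinates of the lift explicit, which the paper leaves to the reader.
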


With respect to the model of intervals with insertion times it is natural to ask whether it is possible to maintain a decomposition of a set of intervals under insertion, {\em without knowing the future insertions} in advance. In Section~\ref{sec:adversary}, we answer this question in the negative even if  coloring decisions can be delayed.

More precisely, we rule out the existence of a semi-online algorithm:

A \emph{semi-online $k$-coloring algorithm} must consider the intervals in their order of insertion time. At any time, an interval in the sequence either has one of the $k$ colors, or is left uncolored. Any interval can be colored at any time, but once an interval is assigned a color, it keeps this color forever. 

A semi-online $k$-coloring algorithm is said to be \emph{colorful} of \emph{value $d$} if it maintains at all times that the colors that are already assigned are such that any point contained in at least $d$ intervals is contained in at least \emph{one of each of the $k$ colors}.

In order to obtain that there is no  semi-online colorful coloring algorithm of bounded value, we prove a stronger statement about the less restrictive \emph{proper coloring} problem. We call a semi-online $k$-coloring algorithm \emph{proper} of \emph{value $d$} if it maintains at all times that the colors that are already assigned are such that any point contained in at least $d$ intervals is contained in at least \emph{two of distinct colors}. Our theorem says that for all natural numbers $k,d$, there is no semi-online proper $k$-coloring algorithm of value $d$.

\begin{theorem}
\label{thm:negative}
For all natural numbers $k,d$, there is no semi-online algorithm that $k$-colors intervals under the operation of inserting intervals, so that at any time, every point covered by at least $d$ intervals is covered by at least two of distinct colors.
\end{theorem}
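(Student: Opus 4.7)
I would prove the theorem by induction on the number of colors $k$, showing that for every $d$ the adversary has a winning strategy.

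For the base case $k=1$, the adversary inserts $d$ intervals all containing a common point $p$. Then $p$ is covered by $d$ intervals, but only the single available color can ever be used, so at most one distinct color appears and the invariant fails.

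For the inductive step, I would assume the statement holds for $k-1$ colors and every value $d'$, and construct a winning adversary strategy for $k$ colors and value $d$. The strategy proceeds in two phases. In a first \emph{extraction phase}, the adversary inserts a large number $M$ of disjoint ``threat gadgets'' along the real line, each consisting of $d$ pairwise-overlapping intervals around a distinct common point. In every gadget, the invariant of value $d$ forces the algorithm to commit at least two intervals to distinct colors. A pigeonhole argument over the $\binom{k}{2}$ possible committed pairs then isolates a large sub-collection of gadgets sharing the \emph{same} committed pair, say $\{c_1,c_2\}$; and a further pigeonhole isolates one fixed color, say $c^*$, that has been committed to many ``canopy'' intervals covering a common sub-region $R$. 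In a second \emph{reduction phase}, the adversary plays the inductive $(k-1,d')$-adversary inside $R$, using the canopy of $c^*$-intervals to ``absorb'' the color $c^*$: inside $R$, the color $c^*$ is already present through the canopy, so the adversary sets things up so that any further use of $c^*$ by the algorithm is redundant for the invariant. The inductive hypothesis then yields a point in $R$ covered by $\geq d'$ intervals on which the remaining $k-1$ colors fail to provide two distinct colors, which combined with the canopy yields a point in the original game covered by $\geq d$ intervals witnessing at most one distinct color.

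The main obstacle I expect is making the ``canopy reduction'' fully rigorous. The algorithm is never formally forbidden from using $c^*$ on intervals inside $R$, so one must design the canopy and the sub-region carefully enough that any new $c^*$-coloring is indeed redundant with the canopy, and cannot be repurposed by the algorithm to satisfy the invariant in some way that circumvents the inductive argument. One must also control how the value $d'$ passed to the inductive call grows as a function of $d$ and $k$, so that the recursion terminates in the base case after a finite insertion sequence; this is where the bulk of the technical bookkeeping will lie.
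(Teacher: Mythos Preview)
Your plan has two genuine gaps, and they sit exactly where you yourself flag the ``main obstacle.''

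\textbf{The canopy does not exist.} In the extraction phase you insert $M$ \emph{disjoint} gadgets. Pigeonhole indeed gives you many gadgets in which some fixed color $c^*$ has been committed, but the $c^*$-intervals you obtain lie in pairwise disjoint regions of the line. There is no common sub-region $R$ covered by all of them, so the ``canopy of $c^*$-intervals over $R$'' simply cannot be formed from this construction. To get a stack of many $c^*$-intervals over a single point you would already need something like the theorem you are trying to prove.

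\textbf{The reduction to $k-1$ colors is not the right game.} Suppose you somehow did have a canopy of depth $m$ in color $c^*$ over a region $R$, and you now play inside $R$. A point $p\in R$ covered by $\ell$ inner intervals is $(m+\ell)$-covered overall, so the invariant fires once $\ell\geq d-m$. But the canopy already supplies color $c^*$ at $p$; hence the invariant is satisfied as soon as \emph{one} inner interval at $p$ carries a color different from $c^*$. This is much easier for the algorithm than a proper $(k-1)$-coloring: it is not required to produce two distinct colors among the inner intervals, only to avoid making the inner stack monochromatic in $c^*$. Your inductive hypothesis for $k-1$ colors therefore does not apply. The algorithm is free to use $c^*$ inside $R$, and when it does, that use is not ``redundant'' from the adversary's point of view --- it actively prevents you from reducing the palette.

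For contrast, the paper does not induct on $k$ at all. It runs a double induction on $d$ and an auxiliary parameter $n$, building a strategy $\S(d,n)$ that forces, for each color $i$, a point $p_i$ covered by $t_i$ intervals \emph{all} of color $i$, with $\sum_i t_i\geq n$. The step from $n-1$ to $n$ plays two disjoint copies of $\S(d,n-1)$; if the two profiles $(t_i)$ and $(t'_i)$ differ, taking coordinatewise maxima already gives sum $\geq n$; if they coincide, one extra long interval $I$ is inserted over the second copy and forced to receive a color $j$ (by playing $\S(d-1,k(d-1))$ in a fresh sub-interval of $I$), which bumps $t'_j$ by one. Taking $n=kd$ yields a monochromatic stack of depth $d$. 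The key difference is that the paper builds the monochromatic stacks directly rather than trying to eliminate a color, and the ``forcing'' sub-strategy only needs to make the algorithm commit one interval, not to set up a recursive palette reduction.
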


Since any semi-online colorful coloring algorithm is also proper, we obtain that there is no such algorithm of bounded value. 

Note that in the bottomless rectangle model a  semi-online colorful coloring algorithm corresponds to sweeping the set of rectangles top to bottom with a line parallel to the $x$-axis and assigning colors irrevocably to already swept rectangles such that at any time every point contained in $d$ of the already swept ones is contained in at least one of each color. Similarly, one can define sweeping line algorithms for coloring homothetic triangles, where the point set is swept top to bottom by a line parallel to one of the sides of the triangles. For octants a sweeping plane algorithm would sweep the point set from top to bottom with a plane parallel to the $x,y$-plane. Since bottomless rectangles can be viewed as a special case of either we can summarize:

\begin{corollary}\label{cor:negative}
 For all natural numbers $k,d$, there is no sweeping line (plane) coloring algorithm in the above sense such that for any set of bottomless rectangles, or triangles, or octants, at any time every point contained in $d$ of the already swept ranges is contained in at least one of each color.
\end{corollary}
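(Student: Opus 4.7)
The plan is to split into three cases---bottomless rectangles, homothetic triangles, and octants---and reduce each to Theorem~\ref{thm:negative}. Throughout I use that any colorful $k$-coloring of a covering is in particular a proper $k$-coloring, so that Theorem~\ref{thm:negative} rules out not only proper but also colorful algorithms.

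For bottomless rectangles the reduction is essentially by definition, as explained in the paragraph preceding Corollary~\ref{cor:dualbott}: the bottomless rectangle $[a,b]\times(-\infty,-t]$ is exactly the interval $[a,b]$ with insertion time $t$, and a horizontal sweep line descending from $y=+\infty$ is exactly the ``present-state'' axis of the semi-online insertion model. Thus a sweep line colorful algorithm of value $d$ for bottomless rectangles is a semi-online proper algorithm of value $d$ for intervals, contradicting Theorem~\ref{thm:negative}.

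For octants I would realize bottomless rectangles as intersections of negative octants with a fixed non-axis-parallel vertical plane $\pi\subset\mathbb{R}^3$ (Figure~\ref{fig:br}). Given a bottomless rectangle instance inside $\pi$, build the corresponding octant instance and run a hypothetical sweep plane algorithm. The sweep plane $\{z=z_0\}$ cuts $\pi$ along a line descending monotonically in the natural 2D coordinates of $\pi$, and an octant is swept exactly when its apex crosses the sweep plane, equivalently when the induced sweep line in $\pi$ crosses the top edge of the associated rectangle. The octant algorithm's coloring guarantee at all 3D points below the sweep plane applies in particular at points of $\pi$, where coverage of a point by $d$ swept octants coincides with coverage by $d$ swept bottomless rectangles. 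This yields a sweep line colorful algorithm for bottomless rectangles, contradicting the first case.

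For homothetic triangles I plan to use the ``degenerate homothetic triangle'' viewpoint signposted before Corollary~\ref{cor:negative}. Given a sweep line algorithm $A$ for triangles homothetic to a base $T_0$ with horizontal top side and aspect ratio $K$, associate to each interval $[a,b]$ inserted at time $t$ the homothetic copy of $T_0$ whose top side is $[a,b]\times\{-t\}$, feed it to $A$, and copy the color back to the interval; this defines a semi-online interval coloring algorithm $B$. A short computation shows that the 2D point $(x_0,-\tau)$ lies in the triangle attached to a swept interval $[a,b]$ of insertion time $t\le\tau$ iff $\min(x_0-a,b-x_0)\geq(\tau-t)/(2K)$. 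I would apply Theorem~\ref{thm:negative} to the variant $B_S$ that first horizontally scales its input by a factor $S>0$ and then runs $B$, and choose $S$ large enough---after the adversary of Theorem~\ref{thm:negative} has produced its concrete finite bad instance for $B_S$, and using that the bad point may be taken strictly interior to each covering interval by a generic-position perturbation---that the shrinkage inequality above is trivially satisfied for every covering interval. The scaled bad instance then witnesses, at the 2D point $(Sx_0,-\tau)$, a failure of $A$'s colorfulness at a point covered by at least $d$ triangles, contradicting the assumption. The main obstacle is precisely this order-of-quantifiers step in the triangle case; the directional verifications in the octant case and the generic position perturbation in the triangle case are routine.
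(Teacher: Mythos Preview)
Your reductions for bottomless rectangles and for octants are correct and are exactly what the paper does: it simply notes that bottomless rectangles are the interval-insertion model and are a special case of octants (via intersection with a vertical plane), so Theorem~\ref{thm:negative} transfers.

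For triangles the paper is much less careful than you are: it just asserts that ``bottomless rectangles can be seen as degenerate homothetic triangles'' and leaves it at that. Your scaling approach is a reasonable way to make this rigorous for a \emph{fixed} base triangle $T_0$, but the order-of-quantifiers obstacle you flag is real as your argument currently stands: you cannot choose $S$ after seeing the adversary's instance for $B_S$, since that instance depends on $S$.

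The fix is to observe that the adversarial strategy $\S(d,kd)$ in the proof of Theorem~\ref{thm:negative} has bounded combinatorial complexity depending only on $k$ and $d$, not on the algorithm. Inspecting the recursion, the total number of intervals the adversary can ever present is bounded by some $N=N(k,d)$; moreover, because the strategy only needs the two copies of $\S(d,n-1)$ to be disjoint, $I$ to cover the second copy, and $I'$ to lie in fresh space, all interval endpoints can be pre-committed to a fixed grid and all candidate witness points $p_i$ can be chosen at fixed positions at distance at least some $\delta_0>0$ from every endpoint. Only the adversary's \emph{stopping decisions} are adaptive. Hence for every $S$ the bad instance produced against $B_S$ has at most $N$ insertion times, interval widths bounded below, and the witness point at distance at least $\delta_0$ from every endpoint. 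Choosing $S$ once and for all so that $S\delta_0$ exceeds the maximal cross-section shrinkage $N/(2K)$ over $N$ time steps then guarantees that the 2D point $(Sx_0,-\tau)$ lies in all $d$ triangles, and your contradiction goes through. With this uniform bound in hand your argument is complete and in fact more honest than the paper's one-line appeal to degeneracy.
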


Since for octants primal and dual problem are equivalent by Corollary~\ref{cor:negative} no such sweeping plane algorithm exists for the primal octant problem either. 

We remark that Corollary~\ref{cor:negative} is in contrast with another recent result in~\cite{A12}, which deals with the primal version of the problem. It can be expressed as coloring points appearing on a line in such a way that at all times any interval containing $p(k)$ points contains one point of each color, or equivalently, coloring point sets in the plane such that every bottomless rectangle containing $p(k)$ points contains a point of each color. In~\cite{A12} it is shown that in this case a linear upper bound on $p(k)$ can be achieved with a semi-online coloring algorithm or equivalently a sweeping line algorithm.

\section{Previous Results}

The covering decomposition problem was first posed by J\'anos Pach in the years 1980-1986~\cite{Pa80,Pa86}. This was originally motivated by the problem of determining the densities of the densest $k$-fold packings and the thinnest $k$-fold coverings of the plane with a given plane convex body (see Section 2.1 in \cite{BMP} for a complete historical account). In particular, he posed the following problem:\\

{\it Is it true that for any plane convex polygon $C$ and for any integer $k$, there exists an integer $p=p(C,k)$ such that every $p$-fold covering of the plane with homothetic
copies of $C$ can be decomposed into $k$ coverings?}\\

\noindent
Our contribution shows that $p(C,k)=O(k^6)$ provided that $C$ is a triangle and the covering is locally finite (Corollary~\ref{cor:inf}). 

Tremendous progress has been made recently in understanding the conditions for the existence of a function $p(k)$ for a given range space, that is, geometric hypergraphs induced by a family of bodies in $\mathbb{R}^d$. To our knowledge, our result is the first polynomial bound for cover-decomposability of {\em homothetic} copies of a polygon. Linear upper bounds have been obtained for halfplanes~\cite{ACCLS09,SY10}, and translates of a convex polygon in the plane~\cite{TT07,PT09,ACCLOR10,GV11}. A restricted version of this problem involving unit balls is shown to be solvable using the probabilistic method in the well-known book from Alon and Spencer~\cite{AS}. The function $p(k)$ has been proved not to exist for range spaces induced by concave polygons~\cite{P10}, axis-aligned rectangles~\cite{CPST09,PT10b}, lines in $\mathbb{R}^2$, and disks~\cite{PTT05}. In a remarkable recent preprint, P{\'a}lv{\"o}lgyi proved the non-existence of a function $p(k)$ even for {\em unit} disks~\cite{P13}, thereby 
invalidating earlier claims from Pach and Mani-Levitska in an unpublished manuscript. Note that the indecomposability results for axis-aligned rectangles imply the same for orthants in $\mathbb{R}^4$, since arbitrary such rectangles can be formed by intersecting four-dimensional orthants with a plane in $\mathbb{R}^4$. 

Finally, in another recent preprint, Kov\'acs~\cite{Ko13} proved that there exists indecomposable coverings by homothets of any polygon with at least four edges, disproving the general conjecture above. Overall, this collection of results essentially closes most of J\'anos Pach's questions on cover-decomposability of plane convex bodies.

\subsubsection*{Previous results on octants.}

P{\'a}lv{\"o}lgyi proved the indecomposability of coverings by translates of a convex polyhedron in $\mathbb{R}^3$~\cite{P10}. His proof, however, does not hold for unbounded polyhedra with three facets. This prompted the first author of the current paper to pose the problem of decomposability of coverings by octants. This was solved by Keszegh and P{\'a}lv{\"o}lgyi, who showed that $p(2)\leq 12$ in this case~\cite{KP11}. Since we will reuse this theorem in our proof, it is worth reproducing it here.

\begin{theorem}[\cite{KP11}]
\label{thm:KP}
There exists a constant $c\leq 12$ such that every finite collection $\bigp\subset \mathbb{R}^3$ of points can be 2-colored so that every negative octant containing at least $c$ points of $\bigp$ contains at least one of each color.
\end{theorem}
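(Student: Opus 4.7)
Plan: The natural approach is to reduce the three-dimensional octant problem to a planar ``bottomless rectangle'' (or equivalently quadrant-over-a-sequence) problem, and then attack the planar problem with a staircase-based coloring. After a generic perturbation, assume that all coordinates of points in $\bigp$ are distinct, so that we can totally order the points by each coordinate separately.

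Step 1 (Reduction to the plane). Sort $\bigp$ in decreasing order of $z$-coordinate as $p_1,\ldots,p_n$. A negative octant $O$ with apex $(x_0,y_0,z_0)$ captures exactly those $p_i$ with $i\geq t$ (where $t$ is determined by $z_0$) that lie in the lower-left quadrant $Q(x_0,y_0) = \{(x,y) : x \leq x_0, y \leq y_0\}$ of the $xy$-plane. Equivalently, if we think of the points as being inserted into the plane in order of increasing $z$, an octant corresponds to a quadrant over the set of points inserted so far. Thus it suffices to prove: for any finite sequence of points inserted one at a time into the plane, there is a $2$-coloring such that at every time every lower-left quadrant containing at least $c$ already-inserted points contains both colors.

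Step 2 (Staircase coloring). I would process the points in insertion order and maintain at each time $t$ the upper-right staircase $S_t$ of the already-inserted points (the maximal points in dominance order). When a new point $p$ is inserted, it either falls strictly below $S_t$ (so it is shadowed by a unique ``cell'' beneath a segment of the staircase) or extends onto $S_{t+1}$, possibly removing some points from the staircase. Assign a color to $p$ by a rule that depends on the position along the staircase at the moment of insertion, e.g.\ by a suitable parity along $S_{t+1}$, designed so that adjacent staircase steps and the points shadowed by them alternate colors as much as possible.

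Step 3 (Quantitative analysis). For any quadrant $Q = Q(x_0,y_0)$ containing at least $c$ already-inserted points at time $t$, the staircase $S_t \cap Q$ is a non-empty sub-staircase, and the points of $\bigp \cap Q$ processed so far are partitioned into the cells beneath consecutive staircase steps. By a counting argument, if $c$ is sufficiently large then either $S_t \cap Q$ already exhibits both colors, or some single cell contains enough shadowed points to force a color switch by the assignment rule; a careful case analysis pushes this constant down to $12$. The main obstacle is Step 3: designing the coloring rule precisely so that the interaction between old staircases (which may have disappeared by time $t$) and the current staircase still guarantees both colors, and then squeezing the constant down to the claimed $c\leq 12$ via a tight combinatorial analysis of how cells accumulate shadowed points over time.
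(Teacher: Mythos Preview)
First, note that the paper does not actually prove Theorem~\ref{thm:KP}; it is quoted from~\cite{KP11} and used as a black box. So there is no ``paper's own proof'' to compare against, only the method of the cited source, which the paper tells us is a \emph{quasi-online} argument (see the discussion preceding Theorem~\ref{thm:negative}).

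Your Step~1 is correct but is not a reduction: viewing the $z$-coordinate as time and negative octants as lower-left quadrants over prefixes is simply a restatement of the primal octant problem, since a quadrant $Q(x_0,y_0)$ at the moment when exactly the points with $z\leq z_0$ have been inserted is precisely the negative octant with apex $(x_0,y_0,z_0)$.

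The genuine gap is in Step~2. What you describe --- process points in $z$-order and assign each point a color determined by the staircase at the moment of insertion --- is an \emph{online} (hence in particular semi-online) sweeping-plane algorithm. The very paper you are working in proves that no such algorithm can exist: Theorem~\ref{thm:negative} and Corollary~\ref{cor:negative} show that for any $k$ and $d$ there is no semi-online sweeping algorithm for octants with value~$d$, and the paragraph following Corollary~\ref{cor:negative} observes that the primal--dual self-equivalence of octants carries this over to the primal problem. Concretely, an adversary can choose the future points so that whatever parity or staircase-position rule you fix in Step~2, some octant of arbitrarily large size ends up monochromatic. So Step~3 is not merely ``the hard part''; it is impossible as stated, because the coloring rule in Step~2 uses only past information.

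The proof in~\cite{KP11} avoids this obstruction by using a quasi-online scheme: one still sweeps, but the coloring decisions are allowed to depend on the entire instance (future points included), not just on the staircase at the time of insertion. Any correct argument must at some point look ahead; your proposal does not.
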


In the past two years, the above result was improved and generalized. First, Keszegh and P{\'a}lv{\"o}lgyi proved that Theorem~\ref{thm:KP} implies that $p(k)$ is bounded for every $k$~\cite{KP14}. Note that this is not obvious, as one could well imagine that for some range spaces, $p(2)$ is bounded, but not $p(k)$ for some $k>2$. Their upper bound on $p(k)$, however, is doubly exponential in $k$. In particular, their proof implies $p(k)\leq 12^{2^k}$. 

Later, the current authors gave a polynomial upper bound on $p(k)$, but restricted to the special case of homothetic triangles in the plane, where points are to be colored~\cite{CKMU13}. The proof uses a new technique involving recoloring each color class of a $k$-coloring with two colors in order to obtain a $2k$-coloring. 

Finally, in May 2013, an unpublished manuscript from Keszegh and P{\'a}lv{\"o}lgyi was communicated to us by J\'anos Pach, in which an improved polynomial upper bound was given for the same special case of homothetic triangles~\cite{KP13}. This improvement makes use of a lemma stating the so-called {\em self-coverability} property of triangles.

We managed to harness the power of these observations for the general case of octants. In particular, we reuse the recoloring algorithm given from~\cite{CKMU13} in Lemma~\ref{lem:splitcol}, and also give a three-dimensional generalization of the self-coverability lemma of~\cite{KP13} in the form of Lemma~\ref{lem:sweep}. Our proof of Theorem~\ref{thm:ub} is longer than that of the doubly exponential upper bound in~\cite{KP14}, but not significantly more involved.

\subsubsection*{Previous results on online coloring problems and proper colorings of geometric hypergraphs.}

Semi-online algorithms have proved to be useful in an interesting special case of the problem with octants, in which all points considered in Theorem~\ref{thm:ub} lie on a vertical plane. This setting can be thought of as points appearing on a line, and we want to color the points with $k$ colors such that at any time, any set of $p(k)$ consecutive points contains at least one of each color. This problem has been studied by a number of authors, whose results were compiled in a joint paper~\cite{A12}. In particular, they showed that under this restriction, we have $1.6k\leq p(k)\leq 3k-2$. The upper bound is achieved using a semi-online algorithm, that does not require the knowledge of the future point insertions, and never recolors a point. This also amounts to coloring {\em primal} range spaces induced by bottomless rectangles with a sweeping line algorithm, i.e., coloring points such that bottomless rectangles containing many of them contain all colors. 

In contrast to our negative result about semi-online algorithms, a larger class of algorithms called \emph{quasi-online} has led to a short proof that $p(2)=3$ in the setting corresponding to our Corollary~\ref{cor:dualbott}, see~\cite{KLP13}, and is indeed also used to obtain Theorem~\ref{thm:KP} in~\cite{KP11}.

Clearly, colorful $2$-colorings and proper $2$-colorings coincide, but also for a larger number of colors proper colorings of geometric hypergraphs have been considered in the primal and dual setting. There are results for bottomless rectangles~\cite{K11}, halfplanes~\cite{F10,K11}, octants~\cite{CK13}, rectangles~\cite{CPST09,AEGSR12,PT10b}, and disks~\cite{PTT05,S07}.

Similarly to our Theorem~\ref{thm:negative} Keszegh, Lemons, and P{\'a}lv{\"o}lgyi consider \emph{online} proper coloring algorithms (points must be colored on arrival). While it is easy to see that there is an optimal online algorithm to color points such that quadrants are colorful, they show that there is no online proper coloring algorithm of bounded value in the primal setting of bottomless rectangles and octants. This is implied by Theorem~\ref{thm:negative} and indeed the proof methods have similarities. In~\cite{KLP13} the quality of online algorithms is then measured as a function of the input size. 

In another vein, Bar-Noy, Cheilaris, Olonetsky, and Smorodinsky~\cite{BCS08, BCOS10} considered {\em conflict-free} colorings in an online setting. There, the problem is to maintain that every $d$-covered point $p$ is covered by one interval whose color is unique among all intervals covering $p$.

\subsubsection*{Other related results.}

In 2010, Kasturi Varadarajan gave a feasibility result for the {\em fractional set cover packing} problem with fat triangles (Corollary 2 in \cite{V10}). This problem can be seen as a fractional variant of the covering decomposition problem. This result involves the construction of so-called {\em quasi-uniform $\varepsilon$-nets}. This construction was recently improved by Chan, Grant, K{\"o}nemann, and Sharpe~\cite{CGKS12}. These results are essentially motivated by the design of improved approximation algorithms for geometric versions of the weighted set cover problem. However, they can also be seen as an intermediate step between the problem of finding small $\varepsilon$-nets and the covering decomposition problem, which involves partitioning a set into $\varepsilon$-nets (see our conclusion for a discussion on this relation).

\section{Proof of Theorem \ref{thm:ub}}
\label{sec:pf1}

In what follows, we will use the shorthand notation $[n] = \{1,2,\ldots ,n\}$, for a positive integer $n$. 
We will refer to the three coordinates of a point $p$ as $p_x$, $p_y$, and $p_z$, respectively. 
The {\em negative octant} with {\em apex} $(p_x,p_y,p_z)\in\mathbb{R}^3$ is the set $\{(x,y,z)\in \mathbb{R}^3 : x\leq p_x, y\leq p_y, z\leq p_z \}$. 
Similarly the {\em positive octant} of $(p_x,p_y,p_z)$ is  $\{(x,y,z)\in \mathbb{R}^3 : x\geq p_x, y\geq p_y, z\geq p_z \}$. 
For convenience we also allow the coordinates of an apex to be equal to $\infty$. In what follows, an octant will generally be considered to be negative, unless explicitly stated otherwise. 
For two points $p,q\in \mathbb{R}^3$, we say that $p$ {\em dominates} $q$ whenever the negative octant with apex $p$ contains $q$, or, equivalently,
whenever $p$ is greater than $q$ coordinate-wise. 
We say that a set of points $\bigp\subset \mathbb{R}^3$ is {\em independent} whenever no point in $\bigp$ is dominated by another.
Finally, we say that a point set is in {\em general position} whenever no two points have the same $x$, $y$, or $z$-coordinates. 
By a standard perturbation argument it suffices to prove Theorem~\ref{thm:ub} for point sets in general position.

\begin{lemma}
\label{lem:sweep}
For every finite independent set $\bigp\subset \mathbb{R}^3$ in general position, there exists a collection $\bign$ of negative octants such that:
\begin{enumeratei}
\item $|\bign | = 2|\bigp | + 1$,
\item the octants in $\bign$ do not contain any point of $\bigp$ in their interior,
\item all points of $\mathbb{R}^3$ that do not dominate any point in $\bigp$ are contained in $\bigcup \bign$.
\end{enumeratei}
\end{lemma}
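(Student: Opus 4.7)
I would prove the lemma by a top-down sweep in $z$. Sort the points so that $p_{1,z} > \cdots > p_{n,z}$, write $\pi$ for the $xy$-projection, and for each $k \in \{0,\ldots,n\}$ let $A_k$ be the set of minimal elements of $\{\pi(p_\ell) : \ell > k\}$ under the componentwise product order (so $A_0$ is the minimal antichain of all projections and $A_n = \emptyset$). The collection $\bign$ will be assembled in groups: every octant in group $k$ will have its $z$-apex equal to $p_{k,z}$, using the convention $p_{0,z} := +\infty$.

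Group $0$ is built by applying the planar analogue of the lemma---itself easily proved by a one-dimensional sweep---to cover the planar complement of $\bigcup_\ell (\pi(p_\ell) + \mathbb{R}^2_{\geq 0})$ with $|A_0|+1$ negative quadrants whose apices are the convex corners of the staircase determined by $A_0$; each such $2$-dimensional apex $(a,b)$ is lifted to the $3$-dimensional apex $(a,b,+\infty)$. For each $k \geq 1$ with $\pi(p_k) \in A_{k-1}$, the newly exposed planar region
\[
\Delta_k := (\pi(p_k) + \mathbb{R}^2_{\geq 0}) \setminus \bigcup_{\ell > k} (\pi(p_\ell) + \mathbb{R}^2_{\geq 0})
\]
is bounded below-left by $\pi(p_k)$ and above-right by $N_k := A_k \setminus A_{k-1}$, and the planar analogue furnishes $|N_k|+1$ quadrants covering $\Delta_k$; each such apex is lifted by appending the $z$-coordinate $p_{k,z}$. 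For the remaining values of $k$ no octants are added.

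The count hinges on the observation that each element of $A_0 \cup \bigcup_{k \geq 1} N_k$ enters the minimal antichain exactly once and leaves it exactly once, at the slab whose index equals that element's own position in the sort; writing $K := \{k \geq 1 : \pi(p_k) \in A_{k-1}\}$ for the set of exit slabs, one therefore has $|K| = |A_0| + \sum_{k \geq 1} |N_k|$, and so
\[
|\bign| = (|A_0|+1) + \sum_{k \in K}(|N_k|+1) = 1 + 2|K| \leq 2n+1,
\]
with duplicate octants padded at the end to reach exact equality. Interior-freeness is immediate from the choice of apex: in group $k$ the $2$-dimensional apex lies on the staircase of $\{\pi(p_\ell) : \ell > k\}$ and so strictly $2$-dominates none of those $\pi(p_\ell)$, while the $z$-apex $p_{k,z}$ automatically excludes every $p_i$ with $i \leq k$ from the interior.

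The main obstacle is verifying coverage: given $q$ dominating no $p_i$, one locates the slab $k^\ast$ containing $q_z$, notes that the planar bad region at height $q_z$ is exactly $\bigcup_{\ell > k^\ast} (\pi(p_\ell) + \mathbb{R}^2_{\geq 0})$, and then traces the evolution of the minimal antichain through the sweep to argue that the planar good region at this height splits as the disjoint union of the slab-$0$ good region with the $\Delta_k$'s for $1 \leq k \leq k^\ast$ satisfying $\pi(p_k) \in A_{k-1}$. Hence $(q_x,q_y)$ falls into exactly one of these pieces; the corresponding group-$k$ octant has $z$-apex $p_{k,z} \geq p_{k^\ast,z} \geq q_z$ and covers $(q_x,q_y)$ by the planar construction, so it contains $q$ componentwise. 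The delicate point is this evolution-tracking argument, which is where the apparently $O(n^2)$-many staircase cells get bundled into only $2|K|+1$ octants.
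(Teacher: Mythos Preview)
Your argument is correct. The paper's route is different in organization: it argues by a direct induction on $|\bigp|$, inserting the points in \emph{increasing} $z$-order. Given a valid collection $\bign_{n-1}$ for the lowest $n-1$ points, one keeps every octant of $\bign_{n-1}$ that avoids the new top point $p_n$, lowers the $z$-apex of each octant containing $p_n$ to $p_{n,z}$, and appends two fresh octants $L_n$ and $B_n$ with $z$-apex $+\infty$, whose $xy$-apices are read off locally from $p_n$ and its nearest predecessors in $x$ and in $y$. No separate planar lemma is invoked, the staircase structure stays implicit throughout, and the count falls out instantly from $|\bign_n| = |\bign_{n-1}| + 2$. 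Your top-down sweep instead groups the octants by their eventual $z$-apex from the outset and reduces each slab to an instance of the two-dimensional staircase-covering fact; this makes the global structure more transparent---one sees at once why every $xy$-apex sits on a convex corner of the relevant staircase, which is exactly what drives interior-freeness---at the price of the antichain bookkeeping needed for the count. The two sweeps in fact assemble the same octants, just discovered in opposite orders.

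One small refinement: independence of $\bigp$ actually forces $K = [n]$. For $\ell > k$ one has $p_{\ell,z} < p_{k,z}$, so incomparability of $p_k$ and $p_\ell$ gives $\pi(p_\ell) \not\leq \pi(p_k)$; hence $\pi(p_k)$ is minimal in $\{\pi(p_\ell) : \ell \geq k\}$, i.e.\ $\pi(p_k) \in A_{k-1}$ for every $k$. Thus $|\bign| = 2n+1$ on the nose and the padding step is unnecessary.
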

\begin{proof}
Let $n=|\bigp|$. 
We prove the lemma by induction on $n$. 
For $n=0$ we take the negative octant covering the whole space with apex $(\infty,\infty,\infty)$. 
If $\bigp =\{p\}$, then we take the octants with apices $(p_x,\infty,\infty)$, $(\infty,p_y,\infty)$, and $(\infty,\infty,p_z)$.
For $n\geq 2$ we consider the points of $\bign$ in order of increasing $z$-coordinates. 
Let us denote them by $p_1, p_2,\ldots ,p_n$ in this order. 
Note that since $\bigp$ is independent, we have $p_{i,x}<p_{j,x}$ or $p_{i,y}<p_{j,y}$ for every $i,j\in [n]$ such that $j<i$.

Suppose, for the sake of induction, that there exists such a collection $\bign_{n-1}$ for the first $n-1$ points of $\bigp$. 
We then consider the next point $p_n$ and construct a new collection $\bign_n$. We do this in three steps. First, we include in $\bign_n$ all the octants of $\bign_{n-1}$ that do not contain $p_n$. Then for each octant $Q'\in\bign_{n-1}$ such that $p_n\in Q'$, we let $Q$ be the octant having the same apex as $Q'$, but with its $z$-coordinate changed to $p_{n,z}$. We add each such octant $Q$ to $\bign_n$. 
Finally, we add two new octants to $\bign_n$. The first octant, $L_n$ (for {\em left}) will have the point $(p_{n,x}, y, \infty)$ as apex, where $y = \min (\{p_{j,y} : 1\leq j < i, p_{j,x} < p_{n,x}\} \cup\{\infty\})$. The second, $B_n$ (for {\em bottom}), will have the point $(x, p_{n,y}, \infty)$ as apex, where $x = \min (\{p_{j,x} : 1\leq j<i, p_{j,y} < p_{n,y} \}\cup\{\infty\})$. See Figure~\ref{fig:octantsweep2} for an illustration.

\begin{figure}[ht]
\begin{center}
\includegraphics[width=.4\textwidth]{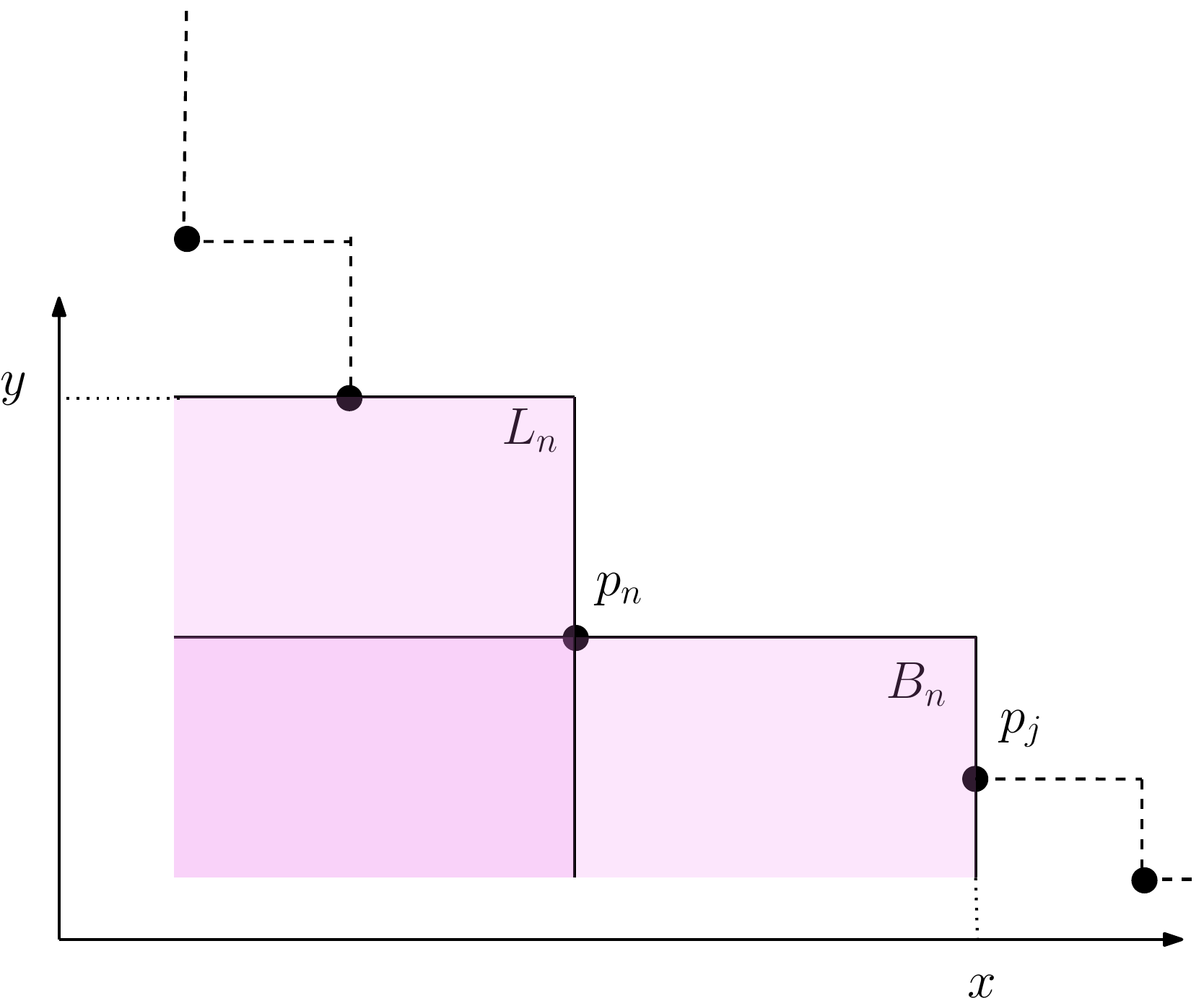}
\end{center}
\caption{Octants $L_n$ and $B_n$ in the proof of Lemma~\ref{lem:sweep}.}\label{fig:octantsweep2}
\end{figure}

The first property on the cardinality of $\bign_n$ holds by construction, as we add exactly two octants at each iteration.
The second property can be checked as follows. First, by the induction hypothesis, octants in $\bign_{n-1}$ avoid $p_1,\ldots,p_{n-1}$. 
Those octants from $\bign_{n-1}$ which avoid $p_n$ were copied to $\bign_n$ and others have their $z$-coordinate modified in a way to avoid $p_n$. 
Finally, the two new octants $L_n$ and $B_n$ have their interiors disjoint from $\bigp$ by definition and the fact that $\bigp$ is independent.

In order to verify the third property, let us consider a point $p'$ that is not dominating any point of $\bigp$. First suppose that 
$p'_z < p_{n,z}$. By induction, there exists an octant in $\bign_{n-1}$ containing $p'$. This octant is either contained in $\bign_n$, or has
its counterpart in $\bign_n$ with a modified $z$-coordinate. In both cases, $p'$ is covered by this octant in $\bign_n$. Now suppose that $p'_z\geq p_{n,z}$. 
We can further suppose that $p'$ neither belongs to $L_n$ nor to $B_n$. Then either $p'_x > x$, or $p'_y > y$, where $x$ and $y$ are the two values used to define $L_n$ and $B_n$. Let us suppose that $p'_x > x$, the other case being symmetric. Let $p_j, j<n$, be the point realizing the minimum in the definition of $x$. We must have $p'_y<p_{j,y}$, as otherwise $p'$ would dominate $p_j$. Then $p'$ must be covered by an octant $Q\in\bign_{n-1}$ whose $y$-coordinate is smaller than $p_{j,y}$, as otherwise $Q$ would contain $p_j$. But by definition $p_{j,y}<p_{n,y}$, hence $Q$ does not contain $p_n$ and therefore also belongs to $\bign_n$. In all cases, $p'$ is contained in an octant of $\bign_n$ and the third property holds.
\end{proof}

Note that the upper bound on the size of $\bign$ in Lemma~\ref{lem:sweep} is tight. For example, consider the point sets $\bigp_n = \{(i,-i,-i)\;|\; i =1,\ldots,n\}$. Indeed, Lemma~\ref{lem:sweep} and the fact that it is tight for all point sets that are in general postition and do not lie in a plane containing the all-ones vector can be deduced from a more general theorem of Scarf~\cite{FK08}.

In order to prove our main theorem, we will use Theorem~\ref{thm:KP}, due to Keszegh and P{\'a}lv{\"o}lgyi.
We proceed to describe a coloring algorithm that achieves the bound of Theorem~\ref{thm:ub}. We do this in two steps. First, we consider the case where the points to color form an independent set.

\begin{lemma}
\label{lem:splitcol}
Let $c$ be a constant satisfying the property in Theorem~\ref{thm:KP}. For any positive integer $k$, every finite independent set $\bigp\subset \mathbb{R}^3$ in general position can be colored with $k$ colors so that every negative octant containing at least $ck^{\log_2 (2c-1)}$ points of $\bigp$ contains at least one of each color.   
\end{lemma}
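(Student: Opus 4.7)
My approach is a doubling induction on the number of colors. Let $T(k)$ denote the smallest threshold such that any finite independent $\bigp \subset \mathbb{R}^3$ in general position admits a valid $k$-coloring, i.e., one in which every negative octant containing $\geq T(k)$ points of $\bigp$ contains all $k$ colors. Theorem~\ref{thm:KP} gives the base case $T(2) \leq c$. I will establish the recursion $T(2k) \leq (2c-1)\,T(k)$, which iterates to $T(2^i) \leq c\,(2c-1)^{i-1}$ for $i \geq 1$. Since merging colors in any valid $m$-coloring preserves validity for a smaller palette at the same threshold, $T$ is monotone in $k$, so for $k \geq 2$ with $i = \lceil \log_2 k \rceil$ we obtain $T(k) \leq T(2^i) \leq c\,(2c-1)^{i-1} \leq c\,k^{\log_2(2c-1)}$; the case $k = 1$ is trivial.

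The core of the inductive step is the following claim: if $\bigp$ has a valid $k$-coloring with threshold $T$, then every negative octant $Q$ with $|\bigp \cap Q| \geq (2c-1)\,T$ contains at least $c$ points of each color class. Granted this, applying Theorem~\ref{thm:KP} independently to each of the $k$ color classes splits each one into two sub-classes at threshold $c$; the resulting $2k$-coloring is then valid at threshold $(2c-1)\,T$, since any such $Q$ contains $\geq c$ points of every original class and hence contains both sub-classes of each by Theorem~\ref{thm:KP}.

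To prove the claim, suppose for contradiction that some such $Q$ contains fewer than $c$ points of some color $i$; let $S = Q \cap (\text{color } i)$, so $|S| \leq c-1$. Apply Lemma~\ref{lem:sweep} to the independent set $S$ to obtain a family $\bign$ of at most $2|S|+1 \leq 2c-1$ negative octants that together cover every point of $\mathbb{R}^3$ not dominating any point of $S$. Because $\bigp$ is independent, no point of $\bigp \setminus S$ dominates a point of $S$, so each of the at least $(2c-1)\,T - (c-1)$ points of $(\bigp \setminus S) \cap Q$ lies in some $N \in \bign$. By pigeonhole, some sub-octant $N \cap Q$---itself a negative octant, as the intersection of two negative octants---contains at least $\lceil ((2c-1)\,T - (c-1)) / (2c-1) \rceil = T$ points of $\bigp \setminus S$.

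The main obstacle is that Lemma~\ref{lem:sweep} only guarantees that the \emph{interior} of each octant in $\bign$ avoids $S$, so a priori $N \cap Q$ might contain a color-$i$ point on its boundary and already be colorful. I resolve this via general position: every finite apex coordinate produced by Lemma~\ref{lem:sweep} coincides with a coordinate of some point of $S \subseteq \bigp$, so shrinking each such coordinate by less than the minimum coordinate gap in $\bigp$ strictly excludes $S$ from every octant while keeping every point of $\bigp \setminus S$ covered. After this perturbation, $N \cap Q$ contains $\geq T$ points of $\bigp$ and none of color $i$, contradicting the validity of the $k$-coloring and completing the induction.
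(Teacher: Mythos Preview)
Your proof is correct and follows essentially the same route as the paper: a doubling induction using Theorem~\ref{thm:KP} to split each color class, with Lemma~\ref{lem:sweep} applied to the deficient color inside a large octant and pigeonhole to extract a sub-octant of size $\geq T(k)$ missing that color. Your treatment of the boundary issue via perturbing the apex coordinates (exploiting that all finite apex coordinates in Lemma~\ref{lem:sweep} come from points of $S$ and general position) is slightly more explicit than the paper's, but the argument is the same in substance.
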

\begin{proof}
For $k=2$, we know there exists a 2-coloring of $\bigp$ satisfying the property of Theorem~\ref{thm:KP}. Suppose now, as an induction hypothesis, that we have a $k$-coloring $\phi$ of $\bigp$ such that every octant containing at least $p(k)$ points contains at least one of each color. Label the colors of $\phi$ by $1,\ldots,k$. 

We now describe a $2k$-coloring $\phi'$. For $i\in [k]$, let $\bigp_i=\phi^{-1}(i)$ be the set of points with color $i$. 
We know from Theorem~\ref{thm:KP} that there exists a 2-coloring $\phi_i:\bigp_i\to\{ i',i''\}$ of $\bigp_i$ such that every octant containing at least $c$ points of $\bigp_i$ contains at least one of each color $i'$ and $i''$. We now define $\phi'$ as the $2k$-coloring obtained by partitioning each color class in this way. We now claim that the coloring $\phi'$ is such that any octant containing at least $(2c-1) p(k)$ points contains at least one of each of the $2k$ colors. 

For the sake of contradiction, let $Q$ be an octant containing at least $(2c-1) p(k)$ points of $\bigp$, but not any point of color $i'$ in $\phi'$. Let $\bigp_Q\subseteq \bigp$ be the set of points contained in $Q$. If $Q$ does not contain any point of color $i'$, it means that it contains at most $c-1$ points of $\phi^{-1}(i)$. Let $\bigp_i = \phi^{-1}(i)\cap \bigp_Q$ be the points of color $i$ in $\phi$ contained in $Q$.

From Lemma~\ref{lem:sweep} and the fact that $\bigp_Q\subset\bigp$ is an independent set, we know that there exists a collection $\bign$ of at most $2(c-1)+1 = 2c-1$ octants whose interiors do not contain any point of $\bigp_i$, but that collectively cover all points of $\bigp_Q\setminus \bigp_i$. Indeed, after intersecting with $Q$ we can assume that $\bign$ covers precisely $\bigp_Q\setminus \bigp_i$ and no other point of $\bigp$.

Hence from the pigeonhole principle, one of the octants $N\in\bign$ contains at least $\lceil((2c-1) p(k)-(c-1)) / (2c-1)\rceil = p(k)$ points of $\bigp_Q$ in its interior, but no point of $\bigp_i$. From the general position assumption, we can find an octant contained in $N$ that contains \emph{exactly} $p(k)$ points of $\bigp_Q$, but no point of $\bigp_i$. But this is a contradiction with the induction hypothesis, since this octant should have contained a point of color $i$ in $\phi$.                

It remains to solve the following recurrence, with starting value $p(2)=c$:
\begin{eqnarray*}
p(2k) & \leq & (2c-1) p(k) \\
p(k) & \leq & c(2c-1)^{\lceil \log_2 k\rceil-1} \\
& < & ck^{\log_2 (2c-1)}
\end{eqnarray*}
\end{proof}

We now describe our algorithm for coloring an arbitrary set of points in general position. This requires a new definition.

Given a set $\bigp$ of points in general position in $\mathbb{R}^3$, the {\em minimal points} of $\bigp$ is the subcollection of points of $\bigp$ that are not dominating any other point of $\bigp$. In general, we define the {\em $i$th layer} $\bigl_i$ of $\bigp$ as its minimal points for $i=1$, and as the minimal points of $\bigp\setminus \bigcup_{1\leq j<i} \bigl_j$ for $i>1$. By definition each layer is an independent set of points.

%
\begin{lemma}
\label{lem:layeredcoloring}
Let $f(k) = ck^{\log_2 (2c-1)}$ be the function derived in Lemma~\ref{lem:splitcol}, where $c$ is a constant satisfying the property in Theorem~\ref{thm:KP}. For any positive integer $k$, every finite set $\bigp\subset \mathbb{R}^3$ in general position can be colored with $k$ colors so that every negative octant containing at least $(k-1)f(k)$ points of $\bigp$ contains at least one of each color.   
\end{lemma}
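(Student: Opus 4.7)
The plan is to color each layer $\bigl_i$ of $\bigp$ independently by applying Lemma~\ref{lem:splitcol} with $k$ colors, obtaining a $k$-coloring $\phi_i\colon\bigl_i\to[k]$ in which every negative octant containing at least $f(k)$ points of $\bigl_i$ is rainbow within $\bigl_i$. Taking the union gives a global coloring $\phi\colon\bigp\to[k]$, and the goal is to prove that any octant $Q$ with $|Q\cap\bigp|\geq(k-1)f(k)$ is rainbow under $\phi$.

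My first step is to exploit a structural property of the layer decomposition: any point $p\in\bigl_i$ with $i>1$ dominates some $q\in\bigl_{i-1}$ (otherwise $p$ would belong to $\bigl_{i-1}$), and since $Q$ is downward-closed, $p\in Q$ implies $q\in Q$. Iterating, the layers meeting $Q$ form an initial segment $\bigl_1,\ldots,\bigl_L$, and there is a chain $p_1<p_2<\cdots<p_L$ with $p_i\in Q\cap\bigl_i$. Now assume for contradiction that $\phi$ misses some color $j$ on $Q$. Then $|Q\cap\bigl_i|<f(k)$ for every $i$, for otherwise Lemma~\ref{lem:splitcol} applied inside $\bigl_i$ would force color $j$ to appear on some point of $Q\cap\bigl_i$. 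Summing over the intersected layers gives $(k-1)f(k)\leq\sum_{i=1}^{L}|Q\cap\bigl_i|\leq L(f(k)-1)$, which forces $L\geq k$.

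The main obstacle I anticipate is upgrading ``$Q$ meets at least $k$ distinct layers'' to ``$Q$ realises all $k$ colors'', because independent per-layer applications of Lemma~\ref{lem:splitcol} are uncoordinated. My intended fix is to refine the coloring by cyclically shifting the palette of $\phi_i$ by $(i-1)\bmod k$; this preserves the per-layer rainbow hypothesis while ensuring that the combined color sets of any $k$ consecutive intersected layers exhaust $[k]$, so that the chain $p_1<\cdots<p_L$ with $L\geq k$ witnesses all colors. If the cyclic shift alone does not pin down the missing color $j$ (for instance because we cannot control which specific point of $\bigl_i$ lies in $Q$), I would fall back on Lemma~\ref{lem:sweep} applied to $\bigl_L\cap Q$: this yields at most $2f(k)-1$ octants that cover $Q\cap(\bigp\setminus\bigl_L)$ while avoiding $\bigl_L\cap Q$ in their interiors, and a pigeonhole over these sweep-octants combined with an induction on the number of layers (or on $k$, matching the factor $k-1$ in the bound) should either reduce to a strictly smaller instance or exhibit a sub-octant with $\geq f(k)$ points in a single layer, yielding the desired contradiction.
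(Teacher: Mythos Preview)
Your plan has a genuine gap at the step you yourself flag as the ``main obstacle'': upgrading $L\geq k$ to ``$Q$ sees all $k$ colors''. Neither of your two proposed fixes works.

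The cyclic shift is inert. Shifting the palette of layer $i$ by $(i-1)\bmod k$ is just a relabelling of the $k$ colors within that layer; it gives you no control over which color the particular chain element $p_i\in Q\cap\bigl_i$ receives. All $k$ points $p_1,\ldots,p_k$ could perfectly well share the same final color. Your parenthetical remark shows you suspect this, and you are right.

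The fallback via Lemma~\ref{lem:sweep} is not a plan. Applying the sweep lemma to $\bigl_L\cap Q$ gives you at most $2|\bigl_L\cap Q|+1$ octants covering $Q\setminus\bigl_L$, but there is no induction hypothesis to invoke on those sub-octants: you have not set up an inductive statement on either the number of layers or on $k$, and the per-layer colorings are already fixed and uncoordinated. Nothing forces a sub-octant to land inside a single layer, so you are back where you started.

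What the paper does instead is \emph{recolor}. After precoloring each layer $\bigl_i$ with Lemma~\ref{lem:splitcol}, it processes the layers in order and, for each $p\in\bigl_i$, looks at the set $D_p$ of points dominated by $p$ (all in earlier layers, all already finally colored). If the precolor of $p$ is absent from $D_p$, keep it; otherwise assign $p$ any color missing from $D_p$ (or leave $p$ uncolored if $D_p$ already shows all $k$ colors). This greedy recoloring yields the invariant that any octant containing a point of layer $i$ sees at least $\min(i,k)$ distinct final colors, so meeting layer $k$ already forces $Q$ to be rainbow. In the remaining case $L\leq k-1$, pigeonhole gives a layer with $\geq f(k)$ points of $Q$, whose \emph{precoloring} is rainbow on $Q$; and if some $p$ there was recolored away from its precolor $j$, that is only because $D_p\subset Q$ already contains a point of final color $j$. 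Hence every color is witnessed in $Q$ regardless. The missing idea in your proposal is precisely this recoloring step that couples the layers.
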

\begin{proof}
We will color the points of $\bigp$ by considering the successive layers one by one, starting with the minimal points. For each layer $\bigl_i$, we do the following:
\begin{itemize}
\item precolor the points of $\bigl_i$ with colors in $[k]$, as is done in Lemma~\ref{lem:splitcol},
\item for each point $p\in \bigl_i$:
\begin{itemize}
\item consider the set of points $D_p=\{q\in\bigp: q \textrm{ dominated by } p\}$; 
\item if $p$ is precolored with a color that is not used for any point in $D_p$ then this color is the final color of $p$;
\item otherwise pick any color not present on points in $D_p$ and color $p$ with it; if all $k$ colors are used within $D_p$, leave $p$ uncolored.
\end{itemize}
\end{itemize}
The main observation here is that although the recoloring step harms the validity of the coloring within a single layer, it is globally innocuous, since any octant containing the point $p$ in the $i$th layer also contains all the points in $D_p$, from the previous layers. Thus, any octant containing $p$ contains a point colored by the same color as the precolor of $p$. Note that each point in the $i$th layer must dominate at least one point from each $i-1$ earlier layers. This forces the invariant that any octant containing a point of the $i$th layer contains points with at least $i$ distinct colors. In particular, any octant containing a point of the $k$th layer will contain all the colors.

The analysis is now straightforward. Suppose that an octant contains at least $(k-1)f(k)$ points. If it contains a point of the $k$th layer, then it contains all $k$ colors. Otherwise, it must contain points of at most $k-1$ layers, and from the pigeonhole principle, it contains at least $(k-1)f(k)/(k-1)=f(k)$ points in a single layer. Then the precoloring of this single layer guarantees each octant of size at least $f(k)$ to be colorful.
\end{proof}

Now Theorem~\ref{thm:ub} follows by replacing $c$ by 12 in the expression of Lemma~\ref{lem:layeredcoloring}, yielding $a\simeq 5.58$.

\section{Proof of Theorem~\ref{thm:negative}}
\label{sec:adversary}

\begin{figure*}[htb]
 \centering
 \includegraphics{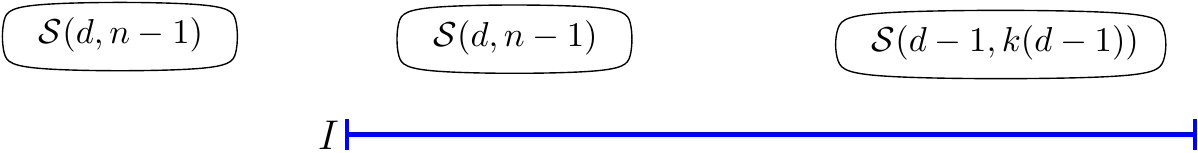}
 \caption{Defining strategy $\S(d,n)$ once $\S(d-1,k(d-1))$ and $\S(d,n-1)$ are defined, in the case where $t_i=t'_i$ for all $i \in [k]$.}
 \label{fig:negative}
\end{figure*}

\begin{proof} We say that a point of the real line is \emph{$d$-covered}, if it is contained in exactly $d$ intervals presented so far. We shall define for every $d$ and $n$ an adversarial strategy $\S(d,n)$ for presenting intervals such that the following is true:

 \begin{enumeratei}
  \item Every semi-online proper $k$-coloring algorithm of value at most $d$ executed against $\S(d,n)$ yields $k$ points $p_1,\ldots,p_k$ such that for every $i\in [k]$, the point $p_i$ is eventually covered by exactly $t_i$ intervals, all of which have color $i$, and\label{enum:covered-points}
  \item $t_1+\ldots+t_k\geq n$.\label{enum:inequality}
 \end{enumeratei}

 Clearly, if for some  semi-online $k$-coloring algorithm $\mathcal{A}$ there is a point eventually covered by at least $d$ intervals, all of which have the same color, then the value of $\mathcal{A}$ is at least $d+1$. Thus if $S(d,kd)$ exists and satisfies~\ref{enum:covered-points} and~\ref{enum:inequality}, then there is no semi-online $k$-coloring algorithm of value at most $d$, which proves the theorem.

 \smallskip

 We prove the existence of $\S(d,n)$ by a double induction on $d$ and $n$. Strategies $\S(d,0)$ are vacuous as~\ref{enum:covered-points} and~\ref{enum:inequality} for $n=0$ hold for the empty set of intervals and any set of $k$ distinct points $p_1,\ldots,p_k$. We define $\S(d,n)$, for $n>0$, once we have defined $\S(d-1,k(d-1))$ and $\S(d,n-1)$. 

Before continuing let us present the following useful claim.

\noindent{\bf Claim.} Consider a set $\mathcal{I}$ of intervals already presented, $I\in\mathcal{I}$, and $I'\notin\mathcal{I}$ such that $I'\subset I$ and $I'\cap J=\emptyset$ for all $J\in\mathcal{I}\setminus I$. If $\S(d-1,k(d-1))$ exists we can present the intervals of $\S(d-1,k(d-1))$ inside $I'$ forcing any semi-online algorithm of value at most $d$ to color $I$.
\begin{proof}
We present the intervals for $\S(d-1,k(d-1))$ completely inside $I'$. If the algorithm does not color $I$ then it can be seen as a $k$-coloring algorithm of value at most $d-1$ executed against $\S(d-1,k(d-1))$. We already know that there is no such algorithm and therefore every $k$-coloring algorithm of value at most $d$ has to color interval $I$.
\end{proof}

Now, we are ready to define $S(d,n)$ for $n>0$. First present two families of intervals, both realizing strategy $S(d,n-1)$, disjointly next to each other. By~\ref{enum:covered-points} there exist two sets of $k$ points each, $p_1,p_2,\ldots,p_k$ and $p'_1,\ldots,p'_k$, and non-negative integers $t_1,\ldots,t_k,t'_1,\ldots,t'_k$ such that $p_i$ is $t_i$-covered and all its intervals are colored with $i$ and also $p'_i$ is $t'_i$-covered and all its intervals are colored with $i$, for every $i\in [k]$. Moreover, by~\ref{enum:inequality} we have $t_1+\ldots+t_k\geq n$ and $t'_1+\ldots+t'_k \geq n$.

If there exists some $i \in \{1,\ldots,k\}$ with $t_i\neq t'_i$ then the sequence of maxima $m_i=\max(t_i,t'_i)$ satisfies $m_1+\ldots+m_k\geq n+1$. Thus, taking for each $i \in \{1,\ldots,k\}$ the point from $\{p_i,p'_i\}$ that corresponds to the larger value of $t_i,t'_i$, we obtain a set of $k$ points satisfying~\ref{enum:covered-points} and~\ref{enum:inequality}.
 
Hence we assume without loss of generality that $t_i=t'_i$ for all $i \in \{1,\ldots,k\}$. Then we present one additional interval $I$ that contains all the points $p'_1,\ldots,p'_k$ but none of the points $p_1,\ldots, p_k$. Moreover, $I$ is chosen big enough so that there exists some $I' \subset I$ that is disjoint from all the other intervals presented so far. We present the intervals realizing strategy $S(d-1,k(d-1))$ inside $I'$, forcing $I$ to be colored (see Figure~\ref{fig:negative}). Let $j$ be the color of $I$. Then $p'_j$ is now contained in exactly $t'_j + 1$ intervals all of which are colored with $j$. Thus $(\{p_1,\ldots,p_k\} \setminus p_j) \cup \{ p'_j\}$ is a set of $k$ points satisfying~\ref{enum:covered-points} and~\ref{enum:inequality}, which concludes the proof.
\end{proof}

\section*{Discussion and Open Problems}

A well-studied problem in discrete geometry is to identify properties of range spaces, or geometric hypergraphs, that allow one to find small {\em $\varepsilon$-nets}. It is known, for instance, that range spaces of bounded VC-dimension have $\varepsilon$-nets of size $O(\frac1{\varepsilon}\log\frac1{\varepsilon})$ (See for instance Chapter 10 in Matou\v{s}ek's lectures~\cite{Mat}). 

The coloring problem that we consider can be cast as the problem of partitioning a point set into $k$ $\varepsilon$-nets for $\varepsilon = p(k)/n$. In fact, it is one of the negative result for covering decomposition that formed the basis of a construction of Pach and Tardos for proving superlinear lower bounds on the size of $\varepsilon$-nets~\cite{PT11}. One can realize that if $p(k)=O(k)$ for a given range space, it implies that this range space also has $\varepsilon$-nets of size $O(1/\varepsilon)$. The latter is known to hold for range spaces induced by octants~\cite{CV07}. Whether $p(k)=O(k)$ for octants is therefore an interesting open problem. In general, giving improved upper or lower bounds on $p(k)$ for octants is the major remaining open question.

Another interesting open question concerns the primal problem, in which points are colored with $k$ colors so that every region containing $p(k)$ points contains a point of each color. The existence of such a function $p(k)$ is still open for homothetic copies of a square, for instance.

\subsection*{Acknowledgments.}

The authors wish to thank Bal{\'a}zs Keszegh, J\'anos Pach and D{\"o}m{\"o}t{\"o}r P{\'a}lv{\"o}lgyi for stimulating discussions on this topic.

Parts of this work were carried out during a visit of Kolja, Piotr, and Torsten at ULB in Brussels, funded by a EUROCORES Short Term Visit grant, and during a visit of Jean at the Jagiellonian University in Krak\'ow.

Research is supported by the ESF EUROCORES programme EuroGIGA, CRP ComPoSe and GraDR. Jean Cardinal is supported by the F.R.S-FNRS as grant convention no.\ R 70.01.11F within the ComPoSe project. Piotr Micek is supported by the Ministry of Science and Higher Education of Poland as grant no.\ 884/N-ESF-EuroGIGA/10/2011/0 within the GraDR project.
Kolja Knauer is supported by the ANR TEOMATRO grant ANR-10-BLAN 0207 and the EGOS grant ANR-12-JS02-002-01. 

\bibliographystyle{plain}
\bibliography{biblio_p(k)_vs_c(k)}

\end{document}